\documentclass[lettersize,11pt,abstract]{scrartcl}

\usepackage[margin=1in]{geometry}
\usepackage{settings}
\usepackage[]{todonotes}
\usepackage{algorithm}
\usepackage[noend]{algpseudocode}
\usepackage{setspace}
\usepackage{tikz}
\usetikzlibrary{positioning, decorations}

\usepackage{mathdots}

\newcommand{\bZ}{\mathbb{Z}}

\newcommand{\cO}{\mathcal{O}}

\usepackage{bm}
\newcommand{\vect}[1]{\bm{#1}}

\renewcommand\epsilon\varepsilon

\DeclarePairedDelimiter\abs{\lvert}{\rvert}

\DeclarePairedDelimiter\floor{\lfloor}{\rfloor}
\newcommand{\ins}{\operatorname{Insert}}

\usepackage[
    backend=biber,
    style=alphabetic,
    minalphanames=3,
    maxalphanames=4,
    maxbibnames=99,
    maxcitenames=99,
    doi=true,
    giveninits=true
]{biblatex}
\usepackage{lineno}

\title{Compressing the Lawler-Moore Dynamic Program for Weighted Tardy Jobs}
\author{Klaus Jansen\thanks{Kiel University, kj@informatik.uni-kiel.de, \url{https://orcid.org/0000-0001-8358-6796}. This author was funded by the Deutsche Forschungsgemeinschaft (DFG, German Research Foundation) – Project Number 453769249.}\and Björn Schumacher\thanks{Kiel University, bsch@informatik.uni-kiel.de, \url{https://orcid.org/0009-0002-2448-4794}. This author was funded by the Deutsche Forschungsgemeinschaft (DFG, German Research Foundation) – Project Number 528381760.}\and Corinna Wambsganz\thanks{Kiel University, cwa@informatik.uni-kiel.de, \url{https://orcid.org/0009-0002-4820-9017}. This author was funded by the Deutsche Forschungsgemeinschaft (DFG, German Research Foundation) – Project Number 453769249.}}
\AfterEndEnvironment{abstract}{%
    \par\smallskip
    {\small\noindent\textbf{Keywords.} Scheduling, Dynamic Programming}
}
\date{}

\begin{document}

\maketitle
\begin{abstract}
  The classic scheduling problem of minimizing the total weight of tardy jobs ($P_m||\sum_jw_jU_j$) asks for a schedule of jobs each with a processing time, weight, and due date on a constant number of parallel identical machines.
  A job is late in a schedule if it completes after its due date, and an optimal schedule is one where the total weight of all late jobs is minimized.
  The classic dynamic programming algorithm of Lawler and Moore [Manag. Sci., 1969] solves this problem in $O(nd_{\max}^m)$ where $n$ is the number of jobs, $d_{\max}$ is the maximum due date of any job, and $m$ is the number of identical machines.
  We show that it is sufficient to only compute part of the dynamic program and obtain the running time $\tilde{O}(n + d_{\max}^{m + 1})$ where $\tilde{O}$ suppresses polylogarithmic factors.
  This affirmatively answers the open question posed by Hermelin, Molter, and Shabtay [INFORMS J. Comput., 2024], whether a running time of $\tilde{O}(n + d_{\max}^2)$ is possible on a single machine.
\end{abstract}
\newpage
\section{Introduction}
Minimizing the total weight of tardy jobs is a classic scheduling problem.
We are given \(n\) jobs with processing times \(p_j\in\bZ_{\geq 1}\), weights \(w_j\in\bZ_{\geq 1}\), and due dates \(d_j\in\bZ_{\geq 1}\) such that \(p_j \leq d_j\) as well as \(m\) parallel identical machines.
A schedule \(\tau\) for this problem is an assignment of jobs to machines together with an order of execution on every machine.
Let \(C_j(\tau)\) be the completion time of job \(j\) in schedule \(\tau\).
A job \(j\) is on time in \(\tau\) if \(C_j(\tau) \leq d_j\) and late otherwise.
We set \(U_j(\tau) = 0\) if \(j\) is on time and \(U_j(\tau) = 1\) if it is late.
The goal is to find a schedule \(\tau\) that minimizes the total weight of late jobs
\[
  \sum_{j=1}^nw_jU_j(\tau).
\]
On a single machine the problem is denoted by \(1||\sum_jw_jU_j\) and on \(m\in O(1)\) identical parallel machines by \(P_m||\sum_jw_jU_j\) using the three-field notation by \textcite{GRAHAM1979287}.

Even on a single machine the problem is NP-hard as it is one of the original 21 problems shown to be NP-hard by \textcite{karp1972}.
For an exact solution it is equivalent to maximize the weight of the on-time jobs instead.
Thus, \(1||\sum_jw_jU_j\) is closely related to the knapsack problem as the problems are the same in the case where all jobs share the same due date.

The classic algorithm for this problem on a single machine is the dynamic programming algorithm due to \textcite{lawlermoore} with running time \(O(n\cdot d_{\max})\) where \(d_{\max}\) is the maximum due date of any job.
It is similar to the algorithm by \textcite{Bellman1957} for knapsack that has a running time of \(O(n\cdot t)\) where \(t\) is the capacity of the knapsack.
Both algorithms follow the same principle: For a job index \(j\in[n]\) and a point in time \(t\in[d_{\max}]\) the dynamic program stores the largest possible weight of on-time jobs with total processing time \(t\) while only using jobs from the set \(\set{1, \ldots, j}\).
In contrast to \citeauthor{Bellman1957}'s algorithm, the algorithm still needs to handle the different due dates.
To cope with these, the jobs are ordered non-decreasingly by due dates before filling the dynamic program table.
This is sufficient since there is always an optimal solution where the on-time jobs are ordered non-decreasingly according to their due dates, also known as the earliest-due-date (EDD) order.
The algorithm is depicted in \cref{alg:lawler-moore}.
It generalizes to \(P_m||\sum_jw_jU_j\) with running time \(O_m(n \cdot d_{\max}^m)\) where \(O_m\) suppresses multiplicative factors depending only on \(m\).

We modify the algorithm by \citeauthor{lawlermoore} by restricting the range of values \(t\) may take in line~\ref{alg:lawler-moore:t-range} of \cref{alg:lawler-moore} and updating the same row in every iteration.
To justify this we give all jobs a priority based on their weight and their index.
We show that there is always a canonical optimal schedule that uses high-priority jobs.
Combining this with the insight that a schedule may contain at most \(\floor{\frac{d_{\max}}{p}}\) on-time jobs with processing time \(p\), we can restrict the necessary updates.
Generalizing this to multiple machines yields the following result.
\begin{restatable}{theorem}{mainthm}\label{thm:main}
  Given an instance of \(P_m||\sum_jw_jU_j\), the optimal value can be calculated in \(\tilde{O}_m(n + d_{\max}^{m+1})\) time and \(O_m(n + d_{\max}^m)\) space.
  Using additional \(\tilde{O}_m(d_{\max}^{m + 1})\) space, we can construct an optimal schedule in the same time.
\end{restatable}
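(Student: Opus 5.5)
We follow the outline in the introduction: keep the Lawler--Moore table as a single $m$-dimensional array $D[t_1,\dots,t_m]$ over $[0,d_{\max}]^m$, overwritten in place. Process jobs in EDD order. Processing job $j$ normally means updating every state $(t_1,\dots,t_m)$ with $t_i\ge p_j$ and $t_i\le d_j$, via $D[t]\gets\max(D[t],D[t-p_je_i]+w_j)$. The plan is to show that for most jobs this update can be skipped entirely, or restricted to few states. The total work should then be $\tilde O_m(d_{\max}^{m+1})$, plus $O(n)$ for bucketing (counting sort on due dates, which are at most $d_{\max}$, and grouping by processing time).

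\textbf{Step 1: canonical schedule.} Define a priority order on jobs. For each processing time $p$, order the jobs of length $p$ by weight, breaking ties by index. Then show there is an optimal schedule that is canonical in the following exchange sense. If an on-time job $j$ with $p_j=p$ is used, then every job $j'$ with $p_{j'}=p$, $d_{j'}\ge d_j$ and higher priority that is late could replace $j$: swap $j'$ into $j$'s slot, keeping EDD order feasible, and the weight does not decrease. From this, the on-time jobs of length $p$ with due date at least $d$ form a prefix, by priority, of the jobs of length $p$ with due date at least $d$. Since a machine holds at most $\lfloor d_{\max}/p\rfloor$ jobs of length $p$, at most $m\lfloor d_{\max}/p\rfloor$ jobs of length $p$ can ever be on time. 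The key structural lemma is then: a job $j$ of length $p$ is \emph{relevant} only if, among the jobs of length $p$ with due date at least $d_j$, fewer than $m\lfloor d_{\max}/p\rfloor$ have higher priority. More precisely, every on-time job is among the top $m\lfloor d_{\max}/p\rfloor$ in priority within its suffix of due dates. Non-relevant jobs can be discarded without changing the optimum, because the canonical schedule never uses them.

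\textbf{Step 2: bounding the work.} Hence the retained jobs of length $p$ are those in the top-$k_p$ ``skyline'' over due dates, with $k_p=m\lfloor d_{\max}/p\rfloor$. We compute them by scanning the jobs of length $p$ in decreasing due date and maintaining the top $k_p$ in a heap. This costs $O(n\log n)$ total, or $O(n)$ plus polylog terms with bucketing. The count of retained jobs could still be large, because the set changes as due dates decrease. So the argument also needs the update per retained job restricted to the $t$-range it can actually affect. For the $r$-th highest-priority job of length $p$ within its suffix, at most $r-1$ other length-$p$ jobs precede it on a machine, so its update only needs $t_i$ in a window.

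Plan A is to charge: each length-$p$ job updates $O(d_{\max}^{m})$ states, and there are $O(m\,d_{\max}/p)$ ``useful'' ones per $p$ in an amortized sense. Summing $d_{\max}^m\cdot d_{\max}/p$ over $p$ gives $O(d_{\max}^{m+1}\log d_{\max})$, the claimed bound. To make this rigorous we need that, per $p$, the number of retained jobs is $O(m\,d_{\max}/p)$ up to log factors. This is the main obstacle, since skyline sets over different due dates can together exceed $k_p$. Plan A tries to show that a retained job whose priority rank later drops out of the top $k_p$ can be handled lazily. A fallback, Plan B, handles each due-date class by merging: for each $p$, process the retained jobs in priority order and restrict their $t$-windows, so that the windows for fixed $p$ have total length $O(m\,d_{\max}/p\cdot d_{\max})$ per coordinate direction.

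\textbf{Step 3: correctness, space, and reconstruction.} Correctness is the in-place Lawler--Moore invariant restricted to retained jobs, by Step 1. The space is $O_m(d_{\max}^m)$ for the table plus $O(n)$ for the input. For reconstruction, log each successful improvement $(t,i,j)$. There are $\tilde O_m(d_{\max}^{m+1})$ such events, and the schedule is recovered by backtracking through the log in reverse time order: from the final optimal state, find the last write to it and step back. This gives the extra space bound stated in \cref{thm:main}.
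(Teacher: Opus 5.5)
Your Step~2 is left open, as you flag yourself, and the claimed bound depends on it. The premise of Plan~A is false: the relevance filter of Step~1 does not leave $\tilde O(m d_{\max}/p)$ jobs of length $p$.

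For example, take $p=1$. For every $d\in[d_{\max}]$, create $\kappa_1=m d_{\max}$ jobs of length $1$ and due date $d$, with weights strictly decreasing in $d$. Then the jobs of due date $d$ are exactly the $\kappa_1$ highest-priority members of $\{j' : p_{j'}=1,\ d_{j'}\ge d\}$. So all $n=m d_{\max}^2$ jobs are relevant, and giving each one a full Lawler--Moore update costs $\Omega(d_{\max}^{m+2})$. In general, discarding alone leaves up to $\sum_p d_{\max}\kappa_p$ jobs, each costing $O(m d_{\max}^m)$. That gives only $\tilde O_m(n+d_{\max}^{m+2})$, which for $m=1$ is the previously known $\tilde O(n+d_{\max}^3)$.

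Plan~A's ``lazy handling'' and Plan~B's ``restricted windows'' point the right way. However, you never say which window a job gets, or why optimality survives the restriction, and that is the entire speedup. The window you hint at (rank $r$ means at most $r-1$ length-$p$ predecessors) says nothing about $t_i$, because jobs of other lengths also precede $j$ on its machine. Processing jobs in priority order instead of EDD order would also invalidate the check $t_r\le d_j$ that the recurrence relies on. Consequently, the $\tilde O_m(d_{\max}^{m+1})$ bound on logged improvements in Step~3 is unsupported as well.

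The missing idea is to run your exchange against the \emph{completion time} rather than the due date. Comparing $d_{j'}$ with $d_j$ is too weak. It does not even give the ``prefix'' claim you derive from it: for $m=1$ and jobs $(p,d,w)=(7,7,100),(3,10,1),(3,5,2)$, the unique optimum keeps $(3,10,1)$ on time and $(3,5,2)$ late. Here is the stronger exchange.
\begin{itemize}
\item Take an optimal schedule $\tau$ with on-time jobs in EDD order that lexicographically maximizes $\sum_{U_j(\tau)=0}(w_j,j)$.
\item Suppose an on-time $j$ of length $p$ is not among the $\kappa_p$ highest-priority jobs of $T_{p,C_j(\tau)}=\{j' : p_{j'}=p,\ d_{j'}\ge C_j(\tau)\}$. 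All of those $\kappa_p$ jobs outrank $j$. At most $\kappa_p$ length-$p$ jobs are on time, including $j$, so one of them, say $j'$, is late.
\item Put $j'$ into $j$'s slot. It stays on time since $d_{j'}\ge C_j(\tau)$. Re-sorting each machine into EDD order keeps every on-time job on time, and the potential strictly increases, a contradiction.
\end{itemize}
Hence every on-time $j$ completes in $\overline C_j=\{t : j\in\overline T_{p_j,t}\}$. By monotonicity of $T_{p,t}$ in $t$, this set is an interval $[a_j,d_j]$. Here $a_j=d_i+1$ if inserting job $i$ evicts $j$ in your reverse-EDD heap scan, and $a_j=p_j$ if $j$ is never evicted.

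Double counting over $t$ gives $\sum_{j:p_j=p}\abs{\overline C_j}=\sum_t\abs{\overline T_{p,t}}\le d_{\max}\kappa_p$, so $\sum_j\abs{\overline C_j}=O(m d_{\max}^2\log d_{\max})$. Restrict job $j$'s update to states whose receiving coordinate $t_r$ lies in $[a_j,d_j]$. This costs $O(m\abs{\overline C_j}(d_{\max}+1)^{m-1})$ per job, which sums to $\tilde O_m(d_{\max}^{m+1})$, and the same count bounds the back links. In the family above, every job's window is just the single point $\{d_j\}$.
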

\begin{algorithm}[t]
\caption{Algorithm by Lawler and Moore for \(1||\sum_jw_jU_j\)}
\label{alg:lawler-moore}
\begin{algorithmic}[1]
\Require \(n\) jobs each with $(p_j,d_j,w_j)$
\Ensure Maximum total weight of on-time jobs
\State Initialize $f_j[t] \gets - \infty$ for every \(j\in\set{0, \ldots, n}\) and $t \in\set{0, \ldots, d_{\max}}$ and $f_0[0]\gets0$
\For{$j=1,\ldots,n$ in non-decreasing order by due dates}
  \For{$t=0, 1,\ldots, d_{\max}$} \label{alg:lawler-moore:t-range}
    \If{\(p_j \leq t \leq d_j\)}
      \State \(f_j[t] \gets \max\{f_{j - 1}[t], f_{j - 1}[t - p_j] + w_j\}\)
    \Else
      \State \(f_j[t] \gets f_{j - 1}[t]\)
    \EndIf
  \EndFor
\EndFor
\State \Return $\max_{t \in \set{0, \ldots, d_{\max}}}f_n[t]$
\end{algorithmic}
\end{algorithm}
This presents an improvement over the Lawler-Moore algorithm when \(n\) is larger than \(d_{\max}\) up to logarithmic factors. Together, we obtain a running time of \(\tilde{O}_m(n + d_{\max}^m\min\{d_{\max},n\})\).
We also positively answer the open question posed by \textcite{DBLP:journals/informs/HermelinMS24}, whether there exists an algorithm for \(1||\sum_jw_jU_j\) with running time \(\tilde{O}(n + d_{\max}^2)\).
Moreover, this result is an extension of the folklore reduction for knapsack to reduce the number of items to \(\tilde{O}(t)\) mentioned in Remark 1.5 of \cite{DBLP:conf/icalp/0001RW21}, which also yields an \(\tilde{O}(n + t^2)\) algorithm for knapsack.
The extension is not straightforward since jobs may have distinct due dates, which complicates a simple exchange argument.
We cope with this difficulty by introducing a priority order on the jobs and considering the highest-priority jobs that may be on time for every point in time.

\subsection{Related Work}
The dependency on \(d_{\max}\) in the running time of the Lawler-Moore algorithm for \(1||\sum_jw_jU_j\) cannot be improved without incurring an exponential dependence on \(n\):
Assuming the SETH, for any \(\epsilon > 0\) there is a \(\delta > 0\) such that there is no algorithm with running time \(O(2^{\delta n}(d_{\max} + P)^{1 - \epsilon})\) where \(P\) is the sum of all processing times \cite{DBLP:conf/sofsem/JansenK23}.
Under the \((\min, +)\)-convolution hypothesis there is also no \(\tilde{O}((n + d_{\max})^{2 - \epsilon})\) algorithm \cite{DBLP:journals/talg/CyganMWW19}.
Hence, on a single machine the quadratic dependence on \(d_{\max}\) in \cref{thm:main} is necessary in combination with the near-linear dependence on \(n\). On multiple machines we do not know whether the \(m+1\) exponent is optimal.
The running time of the Lawler-Moore algorithm can also be stated as \(O(n\cdot P^m)\).
When the maximum processing time \(p_{\max}\) is smaller than \(\sqrt{P}\) and \(m \geq 2\), \textcite{DBLP:journals/corr/abs-2604-13642} showed that the running time can be improved to \(O(n\cdot P \cdot p_{\max}^{2m - 2}) \leq O(n^2 \cdot p_{\max}^{2m - 1})\).

Under the \(\forall\exists\)-SETH, for all \(\epsilon > 0\), no algorithm with running time \(\tilde{O}(n + p_{\max}\cdot n^{1 - \epsilon})\) is possible for \(1||\sum_jw_jU_j\) \cite{DBLP:journals/jcss/AbboudBHS22}.
Their reductions also rule out any running time of \(\tilde{O}(n + p_{\max}^c)\) for any constant \(c > 0\) as noted in \cite{DBLP:journals/corr/abs-2604-13642} and thus we cannot hope for a running time of \(\tilde{O}(n + p_{\max}^2)\), which is possible for knapsack \cite{DBLP:conf/stoc/Jin24,DBLP:conf/stoc/Bringmann24}.
Also in the special case of minimizing the tardy processing time on a single machine (\(1||\sum_jp_jU_j\)), where \(w_j = p_j\), a running time of the form \(\tilde{O}(n + p_{\max}^{O(1)})\) is possible.
In particular, \textcite{DBLP:conf/soda/Klein0R23} developed an \(\tilde{O}(n + p_{\max}^3)\) algorithm.
\textcite{DBLP:journals/informs/HermelinMS24} showed that \(1||\sum_jw_jU_j\) can be solved in \(\tilde{O}(n + d_{\#}d_{\max}^2)\) and \(\tilde{O}(d_{\#}n + d_{\#}d_{\max}p_{\max})\) where \(d_{\#}\) is the number of distinct due dates and \(p_{\max}\) is the maximum processing time.
The second summand in both of these running times can be bounded by \(d_{\max}^3\), which leaves the gap that this paper closes.

For the special case \(P_m||\sum_jp_jU_j\) improvements over the Lawler-Moore algorithm are known:  the problem can be solved in \(\tilde{O}(P^m)\) \cite{DBLP:journals/theoretics/FischerW25} and \(O(n\cdot p_{\max}^{O(m)})\) \cite{DBLP:conf/soda/Klein0R23}.
A recent lower bound shows that for any \(\epsilon > 0\) there is no \(2^{o(n)}\cdot P^{m - \epsilon}\) algorithm for \(P_m||\sum_jp_jU_j\) assuming the SETH \cite{DBLP:conf/stoc/BringmannDW26}.

Due to its connection to knapsack, \(1||\sum_jw_jU_j\) is NP-hard even when \(d_{\#} = 1\) where \(d_{\#}\) is the number of different due dates.
Parameterized solely by the number of different processing times \(p_{\#}\) or the number of different weights \(w_{\#}\), the problem is \(W[1]\)-hard \cite{DBLP:journals/jcss/HeegerH26} but in XP \cite{DBLP:journals/anor/HermelinKPS21}.
In contrast, the problem is FPT parameterized by any two of \(\set{d_{\#}, p_{\#}, w_{\#}}\) \cite{DBLP:journals/anor/HermelinKPS21}.

The fastest known exponential algorithms for this are \(O^{*}(2^{\frac{1}{2}n})\) \cite{lente:hal-00944382} and \(O^{*}(2^{n})\) \cite{DBLP:conf/esa/DharDGGW26} for \(1||\sum_jw_jU_j\) and \(P||\sum_jw_jU_j\) (where \(m\) is part of the input), respectively.
\(O^{*}\) suppresses polynomial factors.

\paragraph{Organization of the Paper}
First we establish the existence of an optimal canonical schedule in \cref{sec:schedule-structure}.
Then, in \cref{sec:algorithm} we show how to use the canonical structure algorithmically.
We conclude with an overview of future work in \cref{sec:conclusion}.

\section{Structure of an Optimal Schedule}\label{sec:schedule-structure}
We assume that the jobs are given in EDD order, i.e.,\ \(d_1 \leq \cdots \leq d_n\).
For our application this is not a restriction as such an order can be calculated in \(O(n\log n)\) or \(O(n + d_{\max})\) time using standard sorting algorithms. We define \([k] \coloneqq \set{1, 2, \ldots, k}\) for \(k \in \mathbb{N}\).

Notice that on a single machine at most \(\floor{\frac{d_{\max}}{p}}\) jobs with processing time \(p\) may be on time.
Thus, in every schedule at most \(\kappa_p \coloneqq m\floor{\frac{d_{\max}}{p}}\) jobs with processing time \(p\) are on time.
In contrast to knapsack we cannot simply restrict the set of jobs with processing time \(p\) to the \(\kappa_p\) most profitable ones since not all jobs share the same due date.
To avoid violating the due dates we consider the most profitable jobs for every point in time.
By fixing a global order we can show that every job may only be considered in a certain time interval.
Further, we prove that the total length of these intervals is bounded by \(\tilde{O}_m(d_{\max}^2)\).

We start by defining a total order \(>_w\) for jobs based on the weights of the jobs; ties are broken based on the index.
More precisely, the jobs are ordered non-increasingly and lexicographically by \((w_j, j)\).
Let \(T_{p, t} \coloneqq \set{j\in[n]}[p_j = p, d_j \geq t]\) be the set of jobs with processing time \(p\) that may be on time when finishing at \(t\in\set{p, \ldots, d_{\max}}\).
We restrict the set \(T_{p, t}\) to the \(\min\set{\kappa_p, \abs{T_{p, t}}}\) most profitable jobs according to \(>_w\) and denote this set of jobs by \(\overline{T}_{p, t}\).
For a job \(j\in[n]\), we define \(\overline C_j \coloneqq \set{t\in\set{p_j, \ldots, d_{\max}}}[j\in\overline{T}_{p_j, t}]\) to be the set of canonical completion times of the job.
Clearly, we have
\begin{equation}
  \label{eq:decreasing-Tsets}
  T_{p, p} \supseteq T_{p, p+1} \supseteq \cdots \supseteq T_{p, d_{\max}}
\end{equation}
for every \(p\in[d_{\max}]\).
This directly implies that every \(\overline C_{j}\) is an interval.
The structure of the intervals is depicted in \cref{fig:intervals}.
\begin{figure}[t]
  \centering
  \begin{tikzpicture}[semithick]
    \draw (0, 0) -- +(14, 0);
    \draw (.5, -.1) -- +(0, .2) node[above] {\(p\)};
    \draw (5.5, -.1) -- +(0, .2) node[above] {\(d_{j'}\)};
    \draw (6, -.1) -- +(0, .2) node[above] {\(a_j\)};
    \draw (13, -.1) -- +(0, .2) node[above] {\(d_j\)};
    \draw (14, -.1) -- +(0, .2) node[above] {\(d_{\max}\)};
    \draw[|-|] (.5, -.75) -- node[above] {\(j'\)} +(5, 0);
    \draw[|-|] (6, -.75) -- node[above] {\(j\)} +(7, 0);

    \node at (1, -1.5) {\(\vdots\)};
    \node at (3, -1.5) {\(\iddots\)};
    \node at (5, -1.5) {\(\iddots\)};
    \node at (7, -1.5) {\(\iddots\)};
    \node at (9, -1.5) {\(\iddots\)};

    \draw[|-|] (.5, -2.25) -- +(8, 0);
    \draw[|-|] (.5, -3) -- +(6.5, 0);

    \draw[|-] (14.5, -.5) -- +(0, -.5);
    \draw[dashed, dash expand off] (14.5, -1) -- node[right] {\(\leq \kappa_p\)} +(0, -1.75);
    \draw[-|] (14.5, -2.75) -- +(0, -.5);
  \end{tikzpicture}
  \caption{Depiction of the overlapping intervals \(\overline C_j\) for the jobs with processing time \(p\).
    The interval of job \(j\) starts at \(a_j = d_{j'} + 1\) because job \(j'\) is more profitable according to \(>_w\) and ``pushes'' \(j\) out of \(\overline T_{p, d_{j'}}\) since there are already \(\kappa_p\) overlapping intervals.
    The figure also illustrates why the total length of the intervals is bounded by \(d_{\max} \cdot \kappa_p\).}
  \label{fig:intervals}
\end{figure}
\begin{lemma}\label{prop:canonic-is-intervall}
  For every \(j\in[n]\), there is an \(a_j\in\set{p_j, p_j + 1, \ldots, d_j + 1}\) such that \(\overline C_j = \set{a_j, a_j + 1, \ldots, d_j}\).
  If \(\overline{C}_j = \emptyset\), \(a_j = d_j + 1\).
\end{lemma}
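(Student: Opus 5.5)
We show two things: that \(\overline C_j\subseteq\set{p_j,\ldots,d_j}\), and that \(\overline C_j\) is upward closed inside this range. Together these give the claimed interval form.

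Fix \(j\) and write \(p=p_j\). If \(t\in\overline C_j\), then \(j\in\overline T_{p,t}\subseteq T_{p,t}\), so \(d_j\geq t\). By definition also \(t\geq p\). Hence \(\overline C_j\subseteq\set{p,\ldots,d_j}\).

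For upward closure, take \(t\in\overline C_j\) and \(t'\) with \(t\leq t'\leq d_j\); we show \(t'\in\overline C_j\). Since \(d_j\geq t'\), we have \(j\in T_{p,t'}\). By \eqref{eq:decreasing-Tsets}, \(T_{p,t'}\subseteq T_{p,t}\). The set \(\overline T_{p,t}\) consists of the \(\min\set{\kappa_p,\abs{T_{p,t}}}\) largest elements of \(T_{p,t}\) under the total order \(>_w\). Since \(j\in\overline T_{p,t}\), fewer than \(\kappa_p\) elements of \(T_{p,t}\) are strictly above \(j\) in \(>_w\). This is immediate if \(\abs{T_{p,t}}\leq\kappa_p\), and follows from \(j\) being among the top \(\kappa_p\) otherwise. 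Every element of \(T_{p,t'}\) above \(j\) also lies in \(T_{p,t}\), so fewer than \(\kappa_p\) elements of \(T_{p,t'}\) are above \(j\). Therefore \(j\) is among the top \(\min\set{\kappa_p,\abs{T_{p,t'}}}\) elements of \(T_{p,t'}\), i.e.\ \(j\in\overline T_{p,t'}\). This uses only that \(>_w\) is total, since ties are broken by index, so the top-\(k\) set is well defined and \(j\) is in it iff its rank is at most \(k\).

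Now assemble the statement. If \(\overline C_j\neq\emptyset\), set \(a_j=\min\overline C_j\). By the two facts, \(\overline C_j=\set{a_j,\ldots,d_j}\), and \(a_j\geq p_j\). If \(\overline C_j=\emptyset\), set \(a_j=d_j+1\), which gives the empty range. In both cases \(a_j\in\set{p_j,\ldots,d_j+1}\).

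The only point requiring care is the rank argument when passing from \(t\) to \(t'\). It needs the set inclusion together with the cardinality cap \(\min\set{\kappa_p,\cdot}\), and handling the rank formulation cleanly covers both cases of the minimum.
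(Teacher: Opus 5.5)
Your proof is correct and follows essentially the paper's route: show via the nesting in \cref{eq:decreasing-Tsets} that membership of \(j\) in \(\overline T_{p_j,t}\) is upward closed for \(t\le d_j\), then take \(a_j\) as the smallest canonical completion time (or \(d_j+1\) if \(\overline C_j=\emptyset\)). You additionally make explicit two steps the paper leaves implicit: the rank argument that turns nested sets \(T_{p,t}\) into nested membership in \(\overline T_{p,t}\), and the containment \(\overline C_j\subseteq\set{p_j,\ldots,d_j}\).
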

\begin{proof}
  Let \(j\in[n]\).
  For every \(t\in\set{p_j,\ldots,d_j - 1}\), we have that \(j\in\overline T_{p_j, t}\) implies \(j\in\overline T_{p_j, t + 1}\) by \cref{eq:decreasing-Tsets}.
  Thus, let \(a_j\) be the smallest point in time such that \(j\in \overline T_{p_j, a_j}\). By the above property we have that \(j\in \overline T_{p_j, t}\) for all \(t \in \set{a_j, a_j + 1, \ldots, d_j}\).
  If no such \(a_j\) exists, i.e., \(\overline C_j = \emptyset\), we set \(a_j = d_j + 1\) as \(\overline{C}_j\) is the empty set.
\end{proof}

\begin{theorem}\label{prop:optimal-structure}
  There exists an optimal schedule \(\tau\) such that every on-time job  completes at a canonical completion time, i.e.,\ \(U_j = 0\) implies that \(C_j(\tau) \in \overline{C}_j\) for every \(j\in[n]\).
  Furthermore, all on-time jobs are ordered in the EDD order given by the indices.
\end{theorem}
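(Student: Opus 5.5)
The plan is an extremal choice of optimal schedule combined with a single exchange step. Number the jobs by their position in \(>_w\): let \(r(j)\in[n]\) be the rank of \(j\), with rank \(1\) for the most profitable job. Among all optimal schedules, consider those in which every machine processes its on-time jobs in EDD order by index, without idle time, and all late jobs are placed after them. Such schedules exist, since sorting the on-time jobs of a machine non-decreasingly by \((d_j, j)\) keeps all of them on time; this is the classical exchange argument for EDD. From these, pick a schedule \(\tau\) minimizing \(\Phi(\tau)\coloneqq\sum_{j\,:\,U_j(\tau)=0} r(j)\). The second claim of the theorem then holds by construction, so it remains to show that every on-time job of \(\tau\) completes at a canonical completion time.

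Suppose not. Then some on-time job \(j\) with \(p\coloneqq p_j\) completes at \(t\coloneqq C_j(\tau)\notin\overline C_j\). Since \(j\) is on time, \(p\le t\le d_j\), so \(j\in T_{p,t}\setminus\overline T_{p,t}\). By the definition of \(\overline T_{p,t}\), this forces \(\abs{\overline T_{p,t}}=\kappa_p\), and every job in \(\overline T_{p,t}\) is strictly higher in \(>_w\) than \(j\).

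Next use the counting bound: a single machine can hold at most \(\floor{d_{\max}/p}\) on-time jobs of processing time \(p\), so \(\tau\) has at most \(\kappa_p\) on-time jobs of processing time \(p\). One of them is \(j\notin\overline T_{p,t}\), so at most \(\kappa_p-1\) jobs of \(\overline T_{p,t}\) are on time. Hence some \(j'\in\overline T_{p,t}\) is late in \(\tau\).

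Now exchange \(j\) and \(j'\): put \(j'\) into \(j\)'s slot and move \(j\) to the late part. Because \(p_{j'}=p\), every completion time stays the same, and \(j'\) finishes at \(t\le d_{j'}\) because \(j'\in T_{p,t}\). Therefore the on-time set changes only by replacing \(j\) with \(j'\). Since \(j'>_w j\), we have \(w_{j'}\ge w_j\), so the new schedule is still optimal. It may violate EDD order on that machine, so re-sort the on-time jobs of the machine by \((d,\text{index})\); this keeps them all on time and does not change the on-time set. The result is an optimal EDD schedule with \(\Phi\) decreased by \(r(j)-r(j')>0\), contradicting the choice of \(\tau\).

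The main obstacle is that re-sorting after the swap moves completion times, so an argument that repairs the bad jobs one at a time need not terminate. Choosing the potential \(\Phi\) on the on-time set, which re-sorting leaves unchanged and which strictly improves with each swap, is what avoids this. The step to check carefully is the counting argument, namely that fewer than \(\kappa_p\) jobs of \(\overline T_{p,t}\) are on time, since the existence of the late job \(j'\) rests on it.
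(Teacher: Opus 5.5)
Your proof is correct and follows the paper's argument: you take an extremal optimal schedule in index-EDD form, use the counting bound \(\kappa_p\) to find a late \(j'\in\overline T_{p,t}\), swap \(j\) and \(j'\) (same processing time), and re-sort into EDD order to reach a contradiction. The only differences are cosmetic. You minimize the sum of \(>_w\)-ranks of the on-time jobs where the paper lexicographically maximizes \(\sum_{U_j=0}(w_j,j)\), and you justify the re-sort by the classical EDD exchange argument, which is sound; the paper's appeal to \(d_j\le d_{j'}\) is not needed and need not hold.
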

\begin{proof}
  Take an optimal schedule that schedules the on-time jobs in index EDD order before it schedules the late jobs on every machine. Choose such an optimal schedule \(\tau\) that also maximizes the sum of on-time indices. Thus, \(\tau\) lexicographically maximizes
  \[
    \sum_{\substack{j\in[n]\\U_j(\tau) = 0}} (w_j, j).
  \]

  Towards a proof by contradiction, suppose that there exists a \(j\in[n]\) such that \(U_j(\tau) = 0\) and \(C_j(\tau) \notin \overline C_j\).
  Then there exists a job \(j'\in\overline T_{p_j, C_j(\tau)}\) with \(U_{j'}(\tau) = 1\) since \(\abs{\overline T_{p_j, C_j(\tau)}} = \kappa_{p_j}\) and at most \(\kappa_{p_j}\) jobs with processing time \(p_j\) are on time in \(\tau\), including \(j\).
  We swap \(j'\) and \(j\) in \(\tau\) and restore the EDD order of the on-time jobs. Restoring the EDD order cannot cause an on-time job to be late as \(p_j = p_{j'}\) and \(d_{j} \leq d_{j'}\). Since \(j' >_w j\), this either increases the weight of the on-time jobs in the schedule or leaves the weight unchanged and increases the sum of indices; both cases yield a contradiction to the choice of \(\tau\).
  Hence, \(C_j(\tau) \in \overline C_j\) for all \(j\in[n]\) with \(U_j(\tau) = 0\).
\end{proof}

Next we bound the total length of the canonical intervals.
\begin{lemma}\label{lem:sumcj}
  It holds that
  \[
    \sum_{j=1}^n \abs{\overline{C}_j} = \sum_{p=1}^{d_{\max}}\sum_{t=p}^{d_{\max}} \abs{\overline T_{p, t}} \leq m d^2_{\max}(1 + \ln d_{\max}) \leq \tilde{O}_m(d_{\max}^2).
  \]
\end{lemma}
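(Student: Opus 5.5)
The plan is to prove the equality by double counting and then bound the double sum term by term, using the cap $\kappa_p$ together with the harmonic series.

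For the equality, I count the set of pairs $(j,t)$ with $t\in\overline C_j$. By definition of $\overline C_j$, a pair $(j,t)$ with $t\in\set{p_j,\ldots,d_{\max}}$ lies in this set if and only if $j\in\overline T_{p_j,t}$. Grouping the pairs by $j$ gives $\sum_j\abs{\overline C_j}$. Grouping them first by $p=p_j$ and then by $t$ gives $\sum_p\sum_{t=p}^{d_{\max}}\abs{\overline T_{p,t}}$. Two small points need care here. First, $\overline T_{p,t}\subseteq T_{p,t}$ contains only jobs with processing time $p$, so each pair is counted exactly once. Second, $p$ ranges over $[d_{\max}]$, which covers everything because $p_j\le d_j\le d_{\max}$.

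For the inequality, I use that $\abs{\overline T_{p,t}}\le\kappa_p=m\floor{d_{\max}/p}\le m\,d_{\max}/p$ by construction. For fixed $p$, the inner sum has at most $d_{\max}$ terms (for $p\ge1$ it has exactly $d_{\max}-p+1\le d_{\max}$). Hence
\[
  \sum_{p=1}^{d_{\max}}\sum_{t=p}^{d_{\max}}\abs{\overline T_{p,t}} \le \sum_{p=1}^{d_{\max}} d_{\max}\cdot\frac{m\,d_{\max}}{p} = m\,d_{\max}^2 H_{d_{\max}},
\]
where $H_k=\sum_{p=1}^{k}1/p$ is the $k$-th harmonic number. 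The standard bound $H_k\le 1+\ln k$ follows from comparing $\sum_{p=2}^k 1/p$ with $\int_1^k dx/x$. This gives $m\,d_{\max}^2(1+\ln d_{\max})$, which is $\tilde O_m(d_{\max}^2)$.

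There is no real obstacle. The only step that needs attention is the double-counting bijection, specifically confirming that the range of $t$ in the definition of $\overline C_j$ matches the index range $t\in\set{p,\ldots,d_{\max}}$ in the double sum.
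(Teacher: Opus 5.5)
Your proposal is correct and matches the paper's proof: the equality comes from regrouping the pairs $(j,t)$ with $t\in\overline C_j$ by processing time. The bound comes from $\abs{\overline T_{p,t}}\le\kappa_p\le m\,d_{\max}/p$ over at most $d_{\max}$ values of $t$, followed by $H_{d_{\max}}\le 1+\ln d_{\max}$; your remarks on why each pair is counted exactly once and why $p\in[d_{\max}]$ suffices simply spell out what the paper leaves as ``by definition.''
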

\begin{proof}
  Let \(p\in[d_{\max}]\).
  We have
  \[
    \sum_{\substack{j\in[n]\\ p_j = p}} \abs{\overline C_j} = \sum_{t=p}^{d_{\max}}\abs{\overline T_{p, t}}
  \]
  by the definition of the \(\overline C_j\).
  Summing over all processing times shows the first equality.

  Further, we have \(\sum_{t=p}^{d_{\max}}\abs{\overline T_{p, t}} \leq d_{\max}\kappa_p = d_{\max}m\floor{\frac{d_{\max}}{p}} \leq md_{\max}^2 \frac{1}{p}\).
  Thus,
  \[
    \sum_{p=1}^{d_{\max}}\sum_{t=p}^{d_{\max}} \abs{\overline T_{p, t}}
    \leq \sum_{p=1}^{d_{\max}}md_{\max}^2 \frac{1}{p}
    = md_{\max}^2\sum_{p=1}^{d_{\max}} \frac{1}{p}
    \leq md_{\max}^2(1 + \ln d_{\max})
  \]
  where the last inequality holds because the \(k\)th harmonic number is at most \(1 + \ln k\) (cf.\ Appendix A.2 in \cite{DBLP:books/daglib/0023376}).
\end{proof}

\section{Improved Algorithm}\label{sec:algorithm}

In this section, we prove \Cref{thm:main} and present our algorithm, which is an improved version of \Cref{alg:lawler-moore} by Lawler and Moore. The algorithm is a dynamic program that iterates over all jobs and computes the maximum weight of an EDD schedule that uses only the jobs considered up to that point and in which the on-time jobs complete at canonical completion times. The maximum weight of such an EDD schedule for all jobs corresponds to an optimal solution for the \(P_m||\sum_jw_jU_j\) problem. In \Cref{sec:intervals} we show how to obtain the canonical completion times; in \Cref{sec:algo} we present our algorithm and in \Cref{sec:optimal} we prove how to construct an optimal solution to \(P_m||\sum_jw_jU_j\) by using our algorithm.

\subsection{Computation of the Intervals \(\overline C_j\)}\label{sec:intervals}
The values \(a_j,j\in [n],\) defining the intervals \(\overline C_j\) in \Cref{prop:canonic-is-intervall} can be calculated by iterating over the jobs in reverse EDD order and using a heap to keep track of the \(\kappa_p\) most profitable items according to \(>_w\) for every processing time \(p\).
More precisely, we first build lists \(G_p = \set{j\in[n]}[p_j = p]\) of all jobs with the same processing time in increasing EDD index order.
For each of those lists, we do the following two steps: Initialize an empty min-heap where we store jobs with the pair $(w_j,j)$ as their key and sort them according to \(>_w\).
Now we iterate over the jobs in reversed order.
For job \(j\) we insert \((w_j, j)\) into the heap.
If the number of heap entries exceeds \(\kappa_p = m \floor{\frac{d_{\max}}{p}}\), we remove the job with the smallest key.
Let this job be \(h\).
Note that we may have \(h = j\).
We set \(a'_h = d_j + 1\).
After iterating over all jobs, we set \(a'_j = p\) for all jobs still in the heap.
\begin{lemma}\label{lem:correctai}
    The values \(a_j'\) computed by the procedure above are the correct values \(a_j\) for all \(j \in [n]\).
\end{lemma}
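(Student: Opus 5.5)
Fix a processing time \(p\) and work only with \(G_p = \set{j_1, \ldots, j_k}\) in increasing EDD index order, so \(d_{j_1} \leq \cdots \leq d_{j_k}\). The plan is to prove a loop invariant: after the procedure has inserted \(j_k, j_{k-1}, \ldots, j_i\) (and performed any removal), the heap contains exactly the \(\min\set{\kappa_p, k-i+1}\) jobs of \(\set{j_i, \ldots, j_k}\) that are largest with respect to \(>_w\). The invariant holds trivially before the first insertion. Inserting one element and then, if the size exceeds \(\kappa_p\), deleting the \(>_w\)-minimum preserves the property "top \(\kappa_p\) elements of the inserted set", so the induction step is routine.

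Next I connect the suffix sets to the sets \(T_{p,t}\). For \(t\) with \(d_{j_{i-1}} < t \leq d_{j_i}\) we have \(T_{p,t} = \set{j_i, \ldots, j_k}\). Ties in due dates need care: several jobs may share a due date. In that case \(T_{p,t}\) equals the suffix starting at the first index with \(d \geq t\). So \(\overline T_{p,t}\) is the heap content right after inserting the job \(j_i\), where \(i\) is the smallest index with \(d_{j_i} \geq t\). For \(t \leq d_{j_1}\) (and \(t \geq p\)), it is the final heap content.

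Now I verify \(a'_h = a_h\). First take a job \(h\) that is removed at the step where \(j = j_i\) is inserted. By the invariant, \(h\) lies in the heap after inserting \(j_{i+1}\) but not after inserting \(j_i\), and since removed jobs never return, it is absent at all later steps. Hence \(h \in \overline T_{p,t}\) for every \(t\) whose associated suffix starts at an index larger than \(i\), i.e.\ every \(t > d_{j_i}\) with \(t \leq d_h\). Conversely \(h \notin \overline T_{p,t}\) for every \(t \leq d_{j_i}\), since the associated suffix starts at index \(\leq i\).

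One subtlety is when \(d_{j_{i+1}} = d_{j_i}\). Then no \(t\) has \(j_{i+1}\) as its first index, but the claimed conclusion is still consistent: \(a_h = d_{j_i}+1\), and if \(h = j\) this gives \(a_h = d_h+1\), the empty interval. So \(\overline C_h = \set{d_j+1, \ldots, d_h}\), which matches \(a'_h = d_j+1\) via \cref{prop:canonic-is-intervall}. It remains to check that this is a valid range, namely \(d_j + 1 \leq d_h + 1\), which holds because \(h\) was inserted no later than \(j\).

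Jobs remaining in the heap at the end belong to \(\overline T_{p,t}\) for all \(t \in \set{p, \ldots, d_h}\). This uses the same argument together with the fact that every \(t \leq d_{j_1}\) sees the final heap, so \(a_h = p = a'_h\).

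The main obstacle is the bookkeeping around equal due dates and the map from \(t\) to the first suffix index. I would handle it by defining \(i(t) = \min\set{i}[d_{j_i} \geq t]\) explicitly and phrasing everything through it.
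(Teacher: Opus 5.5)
Your proof is correct. It rests on the same core fact as the paper's: the heap after processing a suffix of \(G_p\) holds the \(\kappa_p\) most profitable jobs of that suffix, and these suffixes are exactly the sets \(T_{p,t}\). It is organized differently, though.

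The paper proves \(a'_h \le a_h\) and \(a_h \le a'_h\) separately. The first direction needs no invariant at all, only a local count: after \(h\) is removed, the heap holds \(\kappa_p\) jobs that beat \(h\) and have due date at least \(d_j\), so \(h \notin \overline T_{p,d_j}\). The invariant enters only tacitly in the second direction, where the heap after processing exactly \(T_{p_h,a_h-1}\) is taken to equal \(\overline T_{p_h,a_h-1}\). You instead make the invariant explicit and identify every \(\overline T_{p,t}\) with the heap snapshot at step \(i(t)\). You then read off \(\overline C_h\) exactly as the set of \(t\) whose snapshot contains \(h\). This is longer, but it gives an exact characterization rather than a sandwich, and it handles equal due dates cleanly through \(i(t)\), which the paper leaves implicit.

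One sentence is inaccurate as stated. When the removed job is the one just inserted (\(h = j_i\)), it is not in the heap after inserting \(j_{i+1}\). The conclusion you draw is vacuous in that case, since no \(t\) satisfies \(d_{j_i} < t \le d_h\). In general, what you need is two facts. First, \(h\) stays in the heap from its own insertion at step \(i_h\) until its removal. Second, \(t \le d_h\) forces \(i(t) \le i_h\). Also, \(i(t)\) is undefined for \(t > d_{j_k}\). This is harmless, because no job of \(G_p\) can lie in \(\overline T_{p,t}\) for such \(t\).
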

\begin{proof}
  Let \(h\in[n]\).  We show \(a_h' = a_h\) by showing the inequality in both directions.
  \begin{description}
  \item[\(a'_h \leq a_h\)] If \(h\) is never removed from the heap, we have \(a'_h = p_h \leq a_h\) and we are done.
    Now assume that \(h\) was removed.
    Then there exists a job \(j\) whose addition to the heap caused the removal of \(h\).
    When \(j\) is added, all jobs currently in the heap have a due date of at least \(d_j\) and a higher priority than \(h\).
    Therefore \(h\notin \overline{T}_{p_h, d_j}\) and \(a'_h = d_j + 1 \leq a_h\).
  \item[\(a_h \leq a'_h\)] If \(a_h = p_h\), the job \(h\) is one of the \(\kappa_{p_h}\) most profitable jobs according to \(>_w\) and thus still in the heap after all jobs are processed, which implies \(a'_h = p_h\).
    Now assume that \(p_h < a_h\).
    After all jobs in \(T_{p_h, a_h - 1}\) are processed, \(h\) cannot be in the min-heap anymore since \(\overline{T}_{p_h, a_h - 1}\) has \(\kappa_{p_h}\) entries and does not contain \(h\), which implies that \(h\) was removed before jobs with due dates smaller than \(a_h - 1\) are considered.
    Therefore, \(a'_h > a_h - 1\), which means \(a'_h \geq a_h\). \qedhere
  \end{description}
\end{proof}

\begin{lemma}\label{lem:computeIntervals}
    The values \(a_1, \ldots, a_n\) for the intervals \(\overline C_1, \ldots, \overline C_n\) can be computed in \(O(n\log n)\) time using \(O(n)\) space.
\end{lemma}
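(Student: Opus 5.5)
By \cref{lem:correctai}, the procedure described above outputs the correct values \(a_j\). It therefore suffices to bound its running time and space. I would go through the procedure step by step and account for the cost of each phase.

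First, I would construct the lists \(G_p\) for \(p\in[d_{\max}]\). The jobs are already in EDD index order, so one pass over the jobs, appending \(j\) to the list for \(p_j\), produces every \(G_p\) in increasing index order. If I stored an array of \(d_{\max}\) list heads, the cost would be \(O(n + d_{\max})\), which is not \(O(n\log n)\) in general. To avoid this, I would instead stably sort the jobs by the key \(p_j\) using a comparison sort in \(O(n\log n)\) time. The jobs with equal processing time then form contiguous blocks, which are exactly the nonempty lists \(G_p\), still in EDD order. This uses \(O(n)\) space and only touches processing times that actually occur. Each \(\kappa_p = m\floor{d_{\max}/p}\) is computed in \(O(1)\) when its block is processed.

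Next, I would handle each block \(G_p\) separately, reusing a single heap. Every job of \(G_p\) is inserted once and removed at most once. The heap never holds more than \(\min\{\kappa_p,\abs{G_p}\}+1 \le \abs{G_p}+1\) entries, so each operation takes \(O(\log n)\) time. Comparing keys \((w_j,j)\) under \(>_w\) costs \(O(1)\). Each assignment \(a'_h = d_j+1\) also costs \(O(1)\), as does the final assignment \(a'_j = p\) for the jobs remaining in the heap. Summing over all blocks gives \(\sum_p O(\abs{G_p}\log n) = O(n\log n)\) time. Since the heap has at most \(n+1\) entries and the arrays for \(a'_j\) and the lists have size \(n\), the space is \(O(n)\).

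The only subtle point is the one above: no \(\Theta(d_{\max})\) term may appear, which rules out naive bucketing by processing time. Grouping by sorting, or bucketing only the distinct processing times that actually occur, settles this. Everything else is standard heap accounting.
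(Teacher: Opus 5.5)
Your proposal is correct and follows the paper's proof: you rely on \cref{lem:correctai} for correctness, then charge each job one heap insertion and at most one deletion at $O(\log n)$ each, plus $O(1)$ per assignment of an $a'_h$. The only difference is the grouping step, where the paper just asserts that the lists $G_p$ can be built in $O(n)$ time and space; your stable sort by $p_j$ is a more careful justification of the stated bound, since bucketing with an array indexed by $[d_{\max}]$ would incur the extra $\Theta(d_{\max})$ term you point out.
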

\begin{proof}
  By \Cref{lem:correctai} the procedure above computes the values \(a_1, \ldots, a_n\) for the intervals \(\overline C_1, \ldots, \overline C_n\).
  We analyze its running time and space consumption.
  We can build the lists \(G_p\) in \(O(n)\) time using \(O(n)\) space.
  For each job, we have at most one insertion and one deletion operation on the min-heap.
  The min-heap has size at most \(n\).
  Therefore, these operations can be done in \(O(n \log n)\) time and \(O(n)\) space (cf.\ \cite{DBLP:books/daglib/0023376}).
  All update operations for the values \(a_j, j \in [n],\) can also be done in \(O(n)\) time using \(O(n)\) space.
  Together, this yields a running time of \(O(n\log n)\) and a space consumption of \(O(n)\).
\end{proof}

\subsection{The Algorithm}\label{sec:algo}
In the following, we write \(\vect e_i\) for the \(i\)th unit vector.
First we give the formal recurrence for the classic Lawler-Moore algorithm and afterwards the changes we make for our algorithm.

We define values \(F_j(\vect t)\) as the maximum weight of an EDD schedule using a subset of the jobs \(\set{1, \ldots, j}\) with machine-load vector \(\vect t\).
If no such schedule exists, its value is $-\infty$. For \(j=0\) and \(\vect t \in \set{0,1, \ldots, d_{\max}}^m \setminus \set{\mathbf{0}}\), we initialize
\[F_0(\mathbf{0})=0, \qquad F_0(\vect t) = -\infty.\]
For \(j \in \set{1, \ldots ,n}\) and \(\vect t \in \set{0, \ldots, d_{\max}}^m\), we set
\begin{equation}\label{eq:DP}
 F_j(\vect t)=
 \max\{F_{j-1}(\vect t),\ \max_{\substack{r \in [m] \\ p_j \leq t_r \leq d_j}}\{F_{j-1}(\vect t-p_j\vect e_r)+w_j\}\}.
\end{equation}
We use \(-\infty + w_j = -\infty\) and \(F_j(\vect t) = -\infty\) for every \(t \notin\set{0, \ldots, d_{\max}}^m\) to simplify the notation.

Compared to the classic algorithm, our algorithm stores only a single row of the dynamic program and not a row per job.
For a job, we first collect all updates and then apply them afterwards to avoid using any job more than once.
Further, we restrict the value of \(t_r\) in \cref{eq:DP} to values from \(\overline C_j\).
\cref{prop:optimal-structure} guarantees that we still find the optimal total weight of on-time jobs.

For the algorithm, we define a function
\[
  \ins_r\colon
  \begin{cases*}
    \set{0,\ldots,d_{\max}}^{m-1} \times \{0,\ldots,d_{\max}\} \rightarrow \set{0,\ldots,d_{\max}}^m, \\
    (\vect u, t) \mapsto (\vect u_1, \ldots, \vect u_{r - 1}, t, \vect u_r, \ldots ,\vect u_{m - 1})
  \end{cases*}
\]
for each \(r \in [m]\), which constructs from a vector \(\vect u \in \set{0,\ldots,d_{\max}}^{m-1}\) a new vector \(\vect x \in \set{0,\ldots,d_{\max}}^m\) by placing the value \(t\) as its \(r\)th coordinate and copying the old values from \(\vect u\) as its other coordinates.
The complete algorithm is given as \cref{alg:main}.
The array \(f\) in the algorithm represents the dynamic computation.
It is only modified at the end of every iteration of the outermost loop by applying the updates from the corresponding list.

\begin{algorithm}[htbp]
\caption{WeightedTardyJobs}
\label{alg:main}
\begin{algorithmic}[1]
\Require All original jobs $(p_j,d_j,w_j)$ in EDD order, $m$ identical parallel machines
\Ensure Maximum total weight of on-time jobs
\State Compute the values $(a_1,\ldots,a_n)$ as described in \Cref{sec:intervals}
\State Initialize $f[\vect x] \gets - \infty$ for every $\vect x \in L \coloneq \set{0, \ldots, d_{\max}}^m$
\State $f[\mathbf{0}]\gets0$
\For{$j=1,\ldots,n$ in increasing order}
  \State Updates $\gets []$
  \For{$t=d_j,d_j-1,\ldots,a_j$ in decreasing order}
    \For{$\vect u \in \set{0, \ldots, d_{\max}}^{m-1}$}
      \For{$r=1, \ldots, m$}
        \State $\vect x \gets \ins_r(\vect u,t)$
        \State $\vect y \gets \vect x - p_j \vect e_r$
        \If{$f[\vect y]\ne-\infty$}
          \State Updates.append(($\vect x$,$f[\vect y] + w_j$)) \label{alg:main:add-update}
        \EndIf
      \EndFor
    \EndFor
  \EndFor
  \For{($\vect x$, best) in Updates}
    \State $f[\vect x] \gets \max\set{f[\vect x], \text{best}}$
  \EndFor
\EndFor
\State \Return $\max_{\vect x \in L}f[\vect x]$
\end{algorithmic}
\end{algorithm}

\begin{lemma}\label{lem:algcor}
    Given an instance \(I\) of \(P||\sum_jw_jU_j\), \Cref{alg:main} computes a value \(w\) such that \(\sum_{j \in [n]}w_j - w \leq \text{OPT}(I)\) where \(\text{OPT}(I)\) denotes the optimal value of the instance \(I\).
\end{lemma}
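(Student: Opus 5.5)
We need to show that the returned value \(w\) is at least the optimal on-time weight, so that \(\sum_j w_j - w \le \text{OPT}(I)\). We also implicitly want \(w\) achievable; the statement only needs the inequality, but we argue both directions where convenient.

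We proceed by induction over the outer loop. Let \(f^{(j)}\) denote the array after iteration \(j\). The invariant has two parts. First, \(f^{(j)}[\vect x] \ge W\) whenever there is an EDD-ordered set of on-time jobs from \(\set{1,\ldots,j}\) with load vector \(\vect x\) and total weight \(W\) in which every on-time job \(i\) completes at a time in \(\overline C_i\). Second, for soundness, every finite value \(f^{(j)}[\vect x]\) is the weight of some feasible partial schedule of jobs in \(\set{1,\dots,j}\), in EDD order with all of them on time, and with load vector \(\vect x\).

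The base case \(j=0\) holds by the initialization. For the step, take a canonical schedule \(S\) on jobs \(\set{1,\ldots,j}\) with load \(\vect x\). If job \(j\) is not in \(S\), the claim follows from \(f^{(j)}\ge f^{(j-1)}\), since updates only take maxima. Otherwise \(j\) is last on some machine \(r\), because the order is EDD by index, and it completes at \(t=x_r\in\overline C_j=\set{a_j,\ldots,d_j}\). Removing \(j\) gives a canonical schedule with load \(\vect y=\vect x-p_j\vect e_r\), so \(f^{(j-1)}[\vect y]\ge W-w_j>-\infty\) by induction. The loop visits \(t=x_r\), the vector \(\vect u\) consisting of the other coordinates, and this \(r\), so \(\ins_r(\vect u,t)=\vect x\). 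Because all updates are collected before any is applied, \(f[\vect y]\) at that moment equals \(f^{(j-1)}[\vect y]\), and the update \((\vect x, f^{(j-1)}[\vect y]+w_j)\) is appended. Hence \(f^{(j)}[\vect x]\ge W\).

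Soundness follows the same way. Each update appends \(j\) to machine \(r\) of a schedule realizing \(f^{(j-1)}[\vect y]\). That schedule does not contain \(j\), because the updates are delayed. Job \(j\) completes at \(t\le d_j\), and appending it preserves EDD order because its index is the largest so far.

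We conclude by applying \cref{prop:optimal-structure}. There is an optimal schedule whose on-time jobs are in EDD order and complete at canonical times. Its load vector lies in \(L\), since on-time jobs finish by \(d_{\max}\). By the invariant at \(j=n\), the returned \(w=\max_{\vect x} f[\vect x]\) is at least the on-time weight of that schedule, which equals \(\sum_j w_j-\text{OPT}(I)\). This gives the inequality.

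The main subtlety is the single-row, in-place array. We must make sure that the value read for \(\vect y\) is the one from the previous iteration, so that job \(j\) is not used twice; the delayed update list handles this. We also need that the last on-time job on machine \(r\) is \(j\) itself, which is where the EDD order from \cref{prop:optimal-structure} is essential.
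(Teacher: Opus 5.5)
Your proof is correct and follows the paper's route: you invoke \cref{prop:optimal-structure} to get an optimal canonical EDD schedule and observe that \cref{alg:main} considers every canonical placement of each job, and your induction over the outer loop, together with the remark that the delayed update list keeps job \(j\) from being used twice, makes rigorous what the paper only sketches. The one ingredient you use tacitly is \cref{lem:correctai}: you need the computed values \(a_j\) to equal the true left endpoints of \(\overline C_j\) so that the loop really visits \(t = x_r\). Your soundness half is not needed for this inequality; it corresponds to the first part of \cref{prop:schedule-construction}.
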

\begin{proof}
By \Cref{prop:optimal-structure} we know that there exists an optimal schedule \(\tau\) such that every on-time job completes at a canonical completion time and such that all on-time jobs are ordered in EDD order given by the indices. \Cref{lem:correctai} guarantees that the values \(a_1,\ldots,a_n\) in \Cref{alg:main} are correct. Since \Cref{alg:main} iterates for every job \(j\) through all canonical completion times \(\overline{C}_j\) and determines for every possible machine-load vector the maximum weight, \Cref{alg:main} also considers the same placement of \(j\) as in the optimal schedule \(\tau\). With this, we obtain \(\sum_{j \in [n]}w_j - w \leq \text{OPT}(I)\).
\end{proof}

\begin{lemma}\label{lem:algrt}
    \Cref{alg:main} has a running time of \(\tilde{O}_m(n+d_{\max}^{m+1})\) and uses \(O_m (n + d_{\max}^m)\) space.
\end{lemma}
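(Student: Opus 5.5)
I will bound the running time phase by phase and then the space.

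\textbf{Preprocessing.} Sorting into EDD order costs \(O(n\log n)\) (or \(O(n+d_{\max})\) with counting sort). By \Cref{lem:computeIntervals}, computing \(a_1,\ldots,a_n\) takes \(O(n\log n)\) time and \(O(n)\) space. Initializing \(f\) over \(L\) costs \(O(d_{\max}^m)\) time and space, and the final maximum over \(L\) costs the same.

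\textbf{Main loop.} For job \(j\), the loop over \(t\) has \(\abs{\overline C_j}=d_j-a_j+1\) iterations. This uses \Cref{prop:canonic-is-intervall} together with \Cref{lem:correctai}, which guarantees the computed \(a_j\) equal the true left endpoints. Each \(t\)-iteration ranges over \((d_{\max}+1)^{m-1}\) vectors \(\vect u\) and \(m\) indices \(r\), doing \(O(m)\) work each, since forming \(\ins_r(\vect u,t)\) and \(\vect y\) and indexing \(f\) are \(O(m)\) operations. Job \(j\) therefore costs \(O(1+m^2(d_{\max}+1)^{m-1}\abs{\overline C_j})\), where the additive \(O(1)\) covers jobs with empty intervals. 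Summing over \(j\) and applying \Cref{lem:sumcj} gives
\[
O\Bigl(n+m^2(d_{\max}+1)^{m-1}\sum_{j}\abs{\overline C_j}\Bigr)\le O\bigl(n+m^3(d_{\max}+1)^{m-1}d_{\max}^2(1+\ln d_{\max})\bigr),
\]
which is \(\tilde O_m(n+d_{\max}^{m+1})\). Applying the updates takes time linear in the length of the Updates list, and that length is bounded by the number of inner iterations, so this is already counted.

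\textbf{Space.} Besides the \(O(n)\) storage for jobs and the \(a_j\), and \(O(d_{\max}^m)\) for \(f\), we need the Updates list, which I expect to be the main obstacle. Naively it could hold \(m(d_{\max}+1)^{m-1}\abs{\overline C_j}\) entries, and \(\abs{\overline C_j}\) can be as large as \(d_{\max}\), giving \(d_{\max}^m\cdot m\) entries. This is still \(O_m(d_{\max}^m)\), because \(\abs{\overline C_j}\le d_{\max}+1\), so the list for a single job fits the bound. The list is cleared before each new job, so only one job's list exists at a time.

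If one prefers an even tighter argument, the updates can be kept in a second array \(g\) indexed by \(L\), storing the maximum update per \(\vect x\). This array is reset lazily, only at the touched entries, to preserve the time bound, and it also bounds the space by \(O(d_{\max}^m)\).

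Combining the three phases gives \(\tilde O_m(n+d_{\max}^{m+1})\) time and \(O_m(n+d_{\max}^m)\) space.
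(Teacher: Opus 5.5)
Your proposal is correct and follows essentially the same route as the paper: the $O(n\log n)$ interval computation, the main-loop cost $O\bigl(m^2(d_{\max}+1)^{m-1}\sum_j\abs{\overline C_j}\bigr)$ bounded via Lemma~\ref{lem:sumcj}, and an $O_m(d_{\max}^m)$ bound on the per-job Updates list. You also make explicit two points the paper leaves implicit: the $O(1)$ per-job overhead for jobs with empty intervals, which produces the additive $n$, and the bound $\abs{\overline C_j}\le d_{\max}+1$ that justifies the size of Updates.
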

\begin{proof}
    The values \(a_1, \ldots, a_n\) can be computed in \(O(n\log n)\) time using \(O(n)\) space by \Cref{lem:computeIntervals}. The initialization of \(f\) runs in \(O((d_{\max}+1)^m)\) time and space. Afterwards, \Cref{alg:main} considers all possible canonical completion times and all vectors \(\vect u \in \set{0,\ldots,d_{\max}}^{m-1}\) for every machine. The function \(\ins_r(\vect u,t)\) takes \(O(m)\) time and space. All other operations done by \Cref{alg:main} are simple comparisons, computations, array accesses and modifications that run in \(O(1)\) time. In total, we obtain a running time of
    \[
    O(n\log n + m \cdot \sum_{j=1}^n{\abs{\overline{C}_j}} \cdot (d_{\max}+1)^{m-1} \cdot m) \leq \tilde{O}_m(n + d_{\max}^{m+1})
    \]
    where the inequality holds by \Cref{lem:sumcj}.
    The list Updates takes at most \(O_m(d_{\max}^m)\) space while all updates performed in \(f\) are in-place. Altogether, \Cref{alg:main} uses \(O_m(n+d_{\max}^m)\) space.
\end{proof}

\subsection{Finding an Optimal Schedule}\label{sec:optimal}
Constructing an optimal schedule from the dynamic program cannot be done by simple backtracking due to the restriction to a single row.
To be able to construct a schedule without constructing the entire \(n\cdot (d_{\max}+1)^m\) table, we store a sparse pointer data structure.
The number of necessary back links can be bounded in the same way as the number of updates we perform to calculate the table.
\begin{lemma}\label{prop:schedule-construction}
  There exists a schedule where the weighted number of on-time jobs is at least the output of \cref{alg:main}.
  Using additional \(\tilde{O}_m(d_{\max}^{m + 1})\) space, we can construct a schedule in time \(O_m(n + d_{\max}^m)\) after running the dynamic program.
\end{lemma}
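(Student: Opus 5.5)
We attach to every entry of the single row \(f\) a back link that records how its current value was obtained. Concretely, whenever the update loop at the end of iteration \(j\) of \cref{alg:main} strictly increases \(f[\vect x]\) using an update \((\vect x, f[\vect y] + w_j)\) with \(\vect y = \vect x - p_j\vect e_r\), we append a record \((\vect x, j, r, \text{ptr})\) to a global log. Here \(\text{ptr}\) is a pointer to the record that was the current back link of \(\vect y\) at the moment the update was collected, or \(\bot\) if \(\vect y = \mathbf 0\) still holds its initial value. We also keep an array \(\text{last}[\vect x]\) of size \(O_m(d_{\max}^m)\) pointing to the most recent record for \(\vect x\). Because updates of job \(j\) are collected before any is applied, \(\text{last}[\vect y]\) read during collection refers to a record created by a job \(j' < j\) (or is \(\bot\)). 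Each record therefore describes a chain \(\vect x \to \vect y \to \cdots \to \mathbf 0\), and the job indices strictly decrease along the chain.

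The correctness part is an invariant proved by induction over the iterations of the outer loop. After iteration \(j\), for every \(\vect x\) with \(f[\vect x] \neq -\infty\), following \(\text{last}[\vect x]\) yields distinct jobs \(j_1 > j_2 > \cdots > j_k\) together with machines \(r_i\) such that:
\begin{itemize}
\item their weights sum to \(f[\vect x]\);
\item the load vectors along the chain are exactly the partial sums of \(p_{j_i}\vect e_{r_i}\);
\item each step satisfies \(t_{r_i} \in \overline C_{j_i} \subseteq \set{p_{j_i}, \ldots, d_{j_i}}\).
\end{itemize}
Reading the chain in reverse, that is in increasing index order, and placing each \(j_i\) on machine \(r_i\) after the previously placed jobs gives completion time \(t_{r_i} \leq d_{j_i}\). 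So all these jobs are on time, and all remaining jobs are appended at the end as late. The induction step uses two facts: records are never modified once written, and the pointer stored in a new record refers to a state that was valid when the update was collected. Applying this to \(\vect x\) attaining \(\max_{\vect x} f[\vect x]\) shows that a schedule with on-time weight at least the output exists. This proves the first claim, and together with \cref{lem:algcor} it shows the output is optimal.

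For the resources, the number of records is at most the number of appended updates, which by the count in \cref{lem:algrt} and \cref{lem:sumcj} is \(O_m(\sum_j \abs{\overline C_j} \, d_{\max}^{m-1}) = \tilde O_m(d_{\max}^{m+1})\). Each record has constant size, so the log uses the additional space claimed. Maintaining the log adds only \(O(1)\) per update, so the running time of \cref{alg:main} is unchanged. After the DP, we find the maximizing \(\vect x\) in \(O_m(d_{\max}^m)\) time and follow the chain, which has at most \(n\) steps. We then output machine assignments and orders in \(O(n)\) time, marking all other jobs late, for \(O_m(n + d_{\max}^m)\) in total.

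The main obstacle is the interaction between in-place overwriting and back links. Since \(f[\vect y]\) may later be overwritten by a better value, a naive backtracking step that reads \(\text{last}[\vect y]\) at the end could follow a chain that uses jobs with index \(\geq j\), or even reuse \(j\). The fix is to snapshot the pointer at collection time and keep records immutable, which makes the log persistent. The remaining care point is that the pointer in a record refers to a weight equal to the \(f[\vect y]\) value used in the update, so the weights along the chain sum exactly to \(f[\vect x]\).
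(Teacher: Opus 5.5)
Your proposal is correct and follows essentially the same approach as the paper. Both use immutable back links that record the job index, the machine, and a reference to the predecessor's back link, and both bound their number by the candidate count from \cref{lem:algrt} and \cref{lem:sumcj}. The differences are minor: you derive existence directly from an invariant on the back-link chains instead of the paper's predecessor sequence through the row states \(f_0, \ldots, f_n\), and you state explicitly that the predecessor's link must be snapshotted at collection time, which the paper leaves implicit.
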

\begin{proof}
  Let \(f_0\) be the initial state of \(f\) in \cref{alg:main} and \(f_i\) the state after the \(i\)th job was processed.
  Let \(\vect x \in L\) be such that \(f_n(\vect x) = \max_{\vect y\in L}\set{f_n(\vect y)}\).
  By the recursive formula of the dynamic program every entry in \(f_i\) has a predecessor in \(f_{i - 1}\).
  We denote the predecessors of \(\vect x\) by
  \[
    \vect 0 = \vect x^{(0)}, \vect x^{(1)}, \ldots, \vect x^{(n)} = \vect x.
  \]
  When \(\vect x^{(i - 1)} \ne \vect x^{(i)}\), then there exists an \(r_i\in[m]\) such
  that \(\vect x^{(i - 1)} + p_i\vect e_{r_i} = \vect x^{(i)}\) and \(f_{i - 1}(\vect x^{(i - 1)}) + w_i = f_{i}(\vect x^{(i)})\).
  We construct a schedule as follows: for all indices \(i\in[n]\) with a defined \(r_i\) we schedule job \(i\) on machine \(r_i\).
  Since \Cref{alg:main} only considers canonical completion times for each job \(j\), it holds that \(C_j(\tau') \leq d_j\) for our constructed schedule \(\tau'\).
  We arbitrarily schedule the remaining jobs after the on-time jobs.
  The weight of the on-time jobs is at least \(f_n(\vect x)\), which shows the first part of the claim.

  Next, we discuss how to obtain such a schedule constructively by introducing back links.
  Every entry is associated with a reference to a back link that describes how this value was achieved.
  More precisely, every time we find an improvement over a value in \(f\) in line~\ref{alg:main:add-update} we record the current job index, the machine placement, and a reference to the back link yielding the predecessor.
  During the update process we also update the associated back link for every entry by keeping a reference to the back link from the most profitable update.
  After the dynamic program has finished, the back links can be used to construct a schedule as described above.
  This takes \(O(md_{\max} + n)\) time since at most \(md_{\max}\) back links need to be considered and the remaining jobs need to be scheduled somewhere.

  The number of back links created during the algorithm is bounded by the number of candidates considered in the dynamic program.
  The proof of \cref{lem:algrt} shows that the number of candidates considered is bounded by \(\tilde{O}_m(d_{\max}^{m + 1})\), which directly gives the necessary space bound.
\end{proof}
Now we have all the ingredients needed to prove our main result.
\mainthm*
\begin{proof}
  Combining \cref{lem:algcor,prop:schedule-construction} shows that \cref{alg:main} calculates the optimal value.
  \cref{prop:schedule-construction} shows how to construct the schedule.
  \cref{lem:algrt,prop:schedule-construction} together show the running time and space bounds.
\end{proof}

\section{Conclusion}\label{sec:conclusion}
We showed that \(P_m||\sum_jw_jU_j\) can be solved in time \(\tilde{O}_m(n + d_{\max}^{m + 1})\).
For \(1||\sum_jw_jU_j\) this result implies an algorithm with running time \(\tilde{O}(n + d_{\max}^2)\), which answers the open question by \textcite{DBLP:journals/informs/HermelinMS24}.
Assuming the \((\min, +)\)-convolution hypothesis there is no algorithm with running time \(\tilde{O}((n + d_{\max})^{2 - \epsilon})\) for any \(\epsilon > 0\) due to the connection to the knapsack problem.
In contrast to knapsack it is not known whether a truly subquadratic \((\min, +)\)-convolution algorithm yields a subquadratic algorithm for \(1||\sum_jw_jU_j\).
An interesting open question is whether these problems are equivalent under subquadratic reductions.

For the single-machine problem, it is known that a running time \(\tilde{O}(n + p_{\max}^c)\) for any constant \(c > 0\) is impossible under the \(\forall\exists\)-SETH \cite{DBLP:journals/jcss/AbboudBHS22,DBLP:journals/corr/abs-2604-13642}.
In particular, a running time of \(\tilde{O}(n + p_{\max}^2)\) is ruled out under this assumption, but is a hybrid running time of \(\tilde{O}(n + d_{\max}p_{\max})\) possible?
This would also be a direct improvement over the \(\tilde{O}(d_{\#}n + d_{\#}d_{\max}p_{\max})\) algorithm by \textcite{DBLP:journals/informs/HermelinMS24}, removing both \(d_{\#}\) factors.

\section*{Declaration of Generative AI Use}
The idea for the algorithm was generated by GPT-6 Astra.
The authors independently verified the idea and worked out the details of the algorithm and the proofs.
Afterwards, GPT-6 Astra was used to polish the paper.
The authors take full responsibility for the entire content, technical claims, and integrity of the paper.

\printbibliography

\end{document}